\documentclass[11pt]{article}

\usepackage{amsmath,amssymb,amsthm,fullpage}
\usepackage[sort]{cite}
\newtheorem{theorem}{Theorem}

\newtheorem{lemma}[theorem]{Lemma}
\newcommand{\maxdegree}{D}
\begin{document}

\title{Asymmetric Palette Sparsification, Slightly Simplified}
\date{}
\author{Andrew McGregor\thanks{This author is supported by the National Science Foundation under grant CCF-2521579.}}
\maketitle

\begin{abstract}
We present a slightly simplified analysis of the asymmetric palette sparsification result by Assadi and Yazdanyar [TheoretiCS, 2026]. The motivation is mainly pedagogical; our approach avoids hypergeometric concentration bounds and extra constant factors in the palette size.  
\end{abstract}

\section{Introduction}

Let $G=(V,E)$ be an $n$-vertex graph of maximum degree $D$.
In the \emph{palette sparsification problem}, each vertex $v\in V$ independently samples
a random ``palette'' $
S(v) \subseteq [\maxdegree+1]$.
The goal is to choose the palettes so that, with high probability over the random
sampling, there exists a proper coloring where the color assigned to  $v\in V$ is in the set $S(v)$.
The main palette sparsification theorem of Assadi, Chen, and Khanna~\cite{AssadiChenKhanna2019} states that
there is an absolute constant $C>0$ such that if every vertex samples
$C\log n$
colors independently and uniformly from $[\maxdegree+1]$, then with high probability
the graph $G$ admits a proper $(\maxdegree+1)$-coloring using only the sampled
palettes.

More recently, Assadi and Yazdanyar~\cite{AssadiYazdanyar2026} introduced \emph{asymmetric palette sparsification},
in which the palette sizes are allowed to vary from vertex to vertex.
Their result shows that one can guarantee a proper coloring even when only the
\emph{average} palette size is small, at the cost of a weaker bound than the
earlier theorem: whereas the original palette sparsification theorem gives
$|S(v)|=O(\log n)$ for every vertex, the asymmetric version achieves an
$O(\log^2 n)$ bound on the average value of $|S(v)|$.
Although asymmetric palette sparsification gives a weaker guarantee, it has two compensating advantages. Its proof is substantially simpler, and the resulting coloring can be recovered by an especially simple polynomial-time greedy procedure.
In this note, we (slightly) simplify even further.
The motivation is mainly pedagogical; our approach avoids hypergeometric concentration bounds and extra constant factors in the palette size.  Specifically, we prove:

\begin{theorem}\label{thm:random-palette-coloring}
Let $G=(V,E)$ be a graph on $n$ vertices and maximum degree  $D$. For each vertex $v\in V$, choose $t(v)\sim \mathrm{Unif}[0,1]$ independently, and then form a random set $S(v)\subseteq [D+1]$ by including each color independently with probability
\[
p(v)=\min \left\{1,\frac{\log (n/\delta)}{1+t(v)D}\right\}.
\]
Process the vertices in decreasing order of the values $t(v)$, and when vertex $v$ is processed, assign it any color in $S(v)$ that differs from the colors already assigned to its previously processed neighbors.
This procedure succeeds in coloring every vertex with probability at least $1-\delta$ and the expected palette size is at most $\frac{D+1}{D} \cdot \log (n/\delta)\log(D+1)$.
\end{theorem}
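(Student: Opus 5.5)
The proof has two parts: bound the expected palette size by a direct integral, and bound the failure probability by a union bound over vertices, where each vertex is handled by conditioning on its own value $t(v)$.

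\textbf{Palette size.} Write $L=\log(n/\delta)$. Since $p(v)\le L/(1+t(v)D)$ and $|S(v)|$ is a sum of $D+1$ independent indicators,
\[
\mathbb{E}|S(v)| \le (D+1)\int_0^1 \frac{L}{1+tD}\,dt = (D+1)\cdot\frac{L\ln(1+D)}{D}.
\]
This is the stated bound, reading $\log$ as the natural logarithm.

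\textbf{Failure probability.} Fix a vertex $v$ and condition on $t(v)=t$. Vertex $v$ is processed after exactly those neighbors $u$ with $t(u)>t$.
\begin{itemize}
\item Each neighbor independently has $t(u)>t$ with probability $1-t$.
\item So the number $X$ of earlier-processed neighbors is stochastically dominated by $\mathrm{Bin}(D,1-t)$.
\item Those neighbors use at most $X$ distinct colors, so at least $D+1-X$ colors remain free for $v$.
\end{itemize}

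The key point is that the set of blocked colors depends only on $t(u)$ and $S(u)$ for $u\neq v$ and on the greedy choices, which may be fixed deterministically. Hence it is independent of $S(v)$ given $t(v)$. The procedure fails at $v$ only if $S(v)$ misses every free color. Given the blocked set, this happens with probability at most $(1-p)^{D+1-X}\le e^{-p(D+1-X)}$, where $p=p(v)$. If $p=1$, $v$ never fails, so assume $p=L/(1+tD)<1$.

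Averaging over $X\sim\mathrm{Bin}(D,1-t)$ with the moment generating function gives
\[
\Pr[\text{fail at } v\mid t] \le e^{-p(D+1)}\,\mathbb{E}[e^{pX}] = e^{-p(D+1)}\bigl(1-(1-t)+(1-t)e^{p}\bigr)^D = e^{-p}\bigl(te^{-p}+(1-t)\bigr)^D .
\]
The factor satisfies $te^{-p}+1-t = 1-t(1-e^{-p})\le e^{-t(1-e^{-p})}$. The target is to show
\[
p + tD(1-e^{-p}) \ge L .
\]
Then each vertex fails with probability at most $\delta/n$, and a union bound over the $n$ vertices completes the proof.

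\textbf{Main obstacle.} This last inequality is where the exact form of $p$ must be used, because $1-e^{-p}<p$ loses a little. The naive bound $p+tDp=p(1+tD)=L$ would be exact if $1-e^{-p}$ were replaced by $p$. To handle the loss, I would avoid the inequality $1+x\le e^x$ on the binomial factor and instead bound the failure probability more carefully:
\[
\Pr[\text{fail}\mid t] = \mathbb{E}\bigl[(1-p)^{D+1-X}\bigr] = (1-p)^{D+1}\,\mathbb{E}\bigl[(1-p)^{-X}\bigr] = (1-p)\bigl(t(1-p)+(1-t)\bigr)^D .
\]
This equals $(1-p)(1-tp)^D$. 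Then
\[
(1-p)(1-tp)^D \le e^{-p-tpD} = e^{-p(1+tD)} = e^{-L} = \delta/n ,
\]
which is exactly the required per-vertex bound. This route is why the choice $p=L/(1+tD)$ works with no extra constants. The one point that needs care is justifying the stochastic domination by $\mathrm{Bin}(D,1-t)$ and that $(1-p)^{-X}$ is increasing in $X$; both are immediate.
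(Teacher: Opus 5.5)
Your proposal is correct and is essentially the paper's proof: the decisive computation $\mathbb{E}[(1-p)^{D+1-X}]=(1-p)(1-tp)^D\le e^{-p(1+tD)}=\delta/n$ is exactly the paper's single-vertex Lemma~\ref{lem:single-vertex}, and your palette-size integral is identical to the paper's. The only differences are minor: the paper counts the unprocessed neighbors $Y\sim\mathrm{Bin}(d,\tau)$ and uses $d\le D$ at the end rather than stochastic domination by $\mathrm{Bin}(D,1-t)$, and in your final display the first ``$=$'' should be ``$\le$'' (fewer than $X$ colors may be blocked, and you replaced $\mathrm{Bin}(\deg(v),1-t)$ by a dominating variable), while the abandoned moment-generating-function route can simply be deleted.
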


\section{Proof of Theorem}

\begin{lemma}\label{lem:single-vertex}
Suppose each neighbor of a vertex $v$ is already colored independently with probability $1-\tau$, where $\tau\in[0,1]$, and the colors used come from $[D+1]$. Let $S\subseteq [D+1]$ be formed by including each color independently with probability
\[
p=\min \left\{1,\frac{\log (1/\delta)}{1+\tau D}\right\}.
\]
Then the probability that $S$ contains no color available for $v$ is at most $\delta$.
\end{lemma}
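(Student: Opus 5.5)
We condition on which neighbors are colored and what colors they received, and then use only the randomness of $S$. Let $d\le D$ be the degree of $v$ and let $X$ be the number of colored neighbors. The set of colors used by colored neighbors has size at most $X$. Hence the number of available colors satisfies $A \ge D+1-X$. Given $A$, the independent sampling of $S$ gives
\[
\Pr[S \text{ contains no available color} \mid A] = (1-p)^{A} \le e^{-pA} \le e^{-p(D+1-X)}.
\]
If $p=1$, then $S=[D+1]$, and since $A\ge 1$ the failure probability is $0$. So we may assume $p=\log(1/\delta)/(1+\tau D)<1$.

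Next we average over $X$. Since $e^{-p(D+1-X)}$ is increasing in $X$, and $X$ is a sum of $d$ independent Bernoulli$(1-\tau)$ variables, we may assume the worst case $d=D$. Padding with additional always-absent neighbors only lowers the bound, so the worst case is $X\sim\mathrm{Bin}(D,1-\tau)$. Then
\[
\mathbb{E}\bigl[e^{-p(D+1-X)}\bigr] = e^{-p}\,\mathbb{E}\bigl[e^{-p(D-X)}\bigr] = e^{-p}\bigl(\tau e^{-p} + 1-\tau\bigr)^{D},
\]
because $D-X\sim\mathrm{Bin}(D,\tau)$. Using $1-\tau(1-e^{-p}) \le \exp(-\tau(1-e^{-p}))$, this is at most
\[
\exp\bigl(-p-\tau D(1-e^{-p})\bigr).
\]

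It remains to show this is at most $\delta$, that is, $p+\tau D(1-e^{-p})\ge \log(1/\delta)=p(1+\tau D)$. This would require $1-e^{-p}\ge p$, which is false. This is the main obstacle: the crude bound $e^{-p}$ loses too much, since $1-e^{-p}\approx p-p^2/2$. I would therefore avoid the relaxation $(1-p)^A\le e^{-pA}$ and work exactly with $(1-p)$. With $q=1-p$, the conditional failure probability is at most $q^{D+1-X}$, and its expectation is
\[
q\,(\tau q + 1-\tau)^D = (1-p)(1-\tau p)^D \le e^{-p}e^{-\tau p D} = e^{-p(1+\tau D)} = \delta.
\]
This works exactly, with no constant loss, and it explains the choice of $p$.

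So the plan has four steps. First, condition on the neighbor colorings to get the failure probability $(1-p)^{A}$ with $A\ge D+1-X$. Second, use monotonicity to reduce to $X\sim\mathrm{Bin}(D,1-\tau)$. Third, compute the generating function $\mathbb{E}[q^{D-X}]=(1-\tau p)^D$ exactly. Fourth, apply $1-x\le e^{-x}$ twice. The only delicate point is keeping the computation exact via the binomial generating function rather than applying an exponential bound too early, together with the trivial case $p=1$.
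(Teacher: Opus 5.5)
Once you drop the premature $e^{-pA}$ relaxation, your argument is correct and is essentially the paper's proof: bound the conditional failure probability by $(1-p)^{D+1-X}$, compute the binomial generating function exactly to get $(1-p)(1-\tau p)^{D}$, and finish with $1-x\le e^{-x}$. The only cosmetic difference is that you reduce to $d=D$ by monotonicity (stochastic domination of $\mathrm{Bin}(d,1-\tau)$ by $\mathrm{Bin}(D,1-\tau)$), whereas the paper keeps general $d$ and uses $(D+1-d)+\tau d\ge 1+\tau D$ in the last step.
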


\begin{proof}
Let the number of neighbors of $v$ that are not yet colored be $Y\sim \mathrm{Bin}(d,\tau)$ where 
$d\leq D$ is the degree of $v$. Hence the number of already-colored neighbors is $d-Y$, so at most $d-Y$ distinct colors are forbidden to $v$. Therefore the number $A$ of colors available for $v$ satisfies
\[
A\ge (D+1)-(d-Y)=D+1-d+Y.
\]

Conditioned on $Y=y$, the event that $S$ contains no available color has probability at most
\[
(1-p)^A\le (1-p)^{D+1-d+y}.
\]
The lemma is trivial  when $p=1$ since then $|S|=D+1$ is greater than the degree of $v$. Henceforth, assume $p=\log(1/\delta)/(1+\tau D)$. Taking the expectation over $Y$,
\begin{eqnarray*}
\Pr[S\text{ misses all available colors}]
   & \le & (1-p)^{D+1-d}\,\mathbb{E}\bigl[(1-p)^Y\bigr] \\
   &=&  (1-p)^{D+1-d}(1-\tau p)^d \\ 
   &\leq & \exp (-p(D+1-d)-\tau p d ) \\
   &\leq & \exp (-p(1+\tau D)) = \delta \ .    \end{eqnarray*}
   where the equality follows because $Y$ is the sum of independent indicator functions and $\mathbb{E}[(1-p)^X]=1-\mathbb{E}[X]p$ for any indicator  function $X$. The second inequality uses $1-x\leq e^{-x}$ for all $x$.
\end{proof}

\begin{lemma}[Palette Size]
The expected palette size is \[(D+1)\mathbb{E}[p(v)] \leq \log (n/\delta)\log(D+1) \frac{D+1}{D} \ .\]
\end{lemma}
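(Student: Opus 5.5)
The plan is a direct computation. The equality $\mathbb{E}|S(v)| = (D+1)\mathbb{E}[p(v)]$ comes from linearity of expectation. Condition on $t(v)$: each of the $D+1$ colors is included independently with probability $p(v)$, so the conditional expected size is $(D+1)p(v)$. Then take the expectation over $t(v)$.

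For the inequality, write $L=\log(n/\delta)$ and drop the minimum with $1$, using $p(v)\le L/(1+t(v)D)$. Since $t(v)\sim\mathrm{Unif}[0,1]$, this gives
\[
\mathbb{E}[p(v)] \le \int_0^1 \frac{L}{1+tD}\,dt = \frac{L}{D}\,\Bigl[\log(1+tD)\Bigr]_0^1 = \frac{L\log(D+1)}{D}.
\]
Multiplying by $D+1$ yields $(D+1)\mathbb{E}[p(v)]\le L\log(D+1)\frac{D+1}{D}$, which is the claimed bound.

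The only points needing care are minor. First, the integral requires $D\ge 1$. If $D=0$ the graph has no edges, the bound degenerates, and the statement is only meaningful for $D\ge1$, so we assume this. Second, $\log$ must denote the natural logarithm for the antiderivative $\frac{1}{D}\ln(1+tD)$ to be exact. This is consistent with the paper's use of $\exp$ in Lemma~\ref{lem:single-vertex}. There is no real obstacle; the step carrying the content is the evaluation of the integral, which converts the $1/(1+tD)$ profile into the $\log(D+1)$ factor.
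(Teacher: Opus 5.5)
Your proposal is correct and follows the paper's proof: drop the minimum with $1$, bound $\mathbb{E}[p(v)]$ by $\int_0^1 \frac{\log(n/\delta)}{1+tD}\,dt = \frac{\log(n/\delta)}{D}\log(D+1)$, and multiply by $D+1$. The paper evaluates the integral with the substitution $x=1+tD$ rather than writing the antiderivative directly, which is only cosmetic, and your side remarks (linearity of expectation for the equality, $D\ge 1$, natural logarithm) are accurate.
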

\begin{proof}
$\mathbb{E}[p(v)]$ can be evaluated exactly but to prove the lemma, we just note 
$
p(t)\le \frac{\log (n/\delta)}{1+tD},
$
and so: \[
\mathbb{E}[p(t)]
   \le \int_0^1 \frac{\log (n/\delta)}{1+tD}\,dt
   = \frac{\log (n/\delta)}{D}\int_1^{D+1}\frac{1}{x}\,dx
   = \frac{\log (n/\delta)}{D}\log(D+1) \ . \qedhere \]
\end{proof}

\begin{proof}[Proof of Theorem \ref{thm:random-palette-coloring}]
Fix a vertex $v$, and condition on the value $t(v)=\tau$. Since the random variables $\{t(u):u\neq v\}$ are independent and uniform in $[0,1]$, each neighbor $u$ of $v$ satisfies
$
\Pr[t(u)<\tau\mid t(v)=\tau]=\tau$
independently of the others. Thus, conditioned on $t(v)=\tau$, the number of neighbors of $v$ that have not yet been processed is distributed as $\mathrm{Bin}(\deg(v),\tau)$. Applying Lemma~\ref{lem:single-vertex} with $d=\deg(v)$ and this value of $\tau$, we obtain
$
\Pr[v\text{ cannot be colored}\mid t(v)=\tau]\le \delta/n$. Averaging over $\tau$,
\[
\Pr[v\text{ cannot be colored}]\le \delta/n.
\]
Taking the union bound over all vertices, we deduce $
\Pr[\text{some vertex cannot be colored}]
   \le \delta$.
\end{proof}

\paragraph{Acknowledgements and Developments.} The author had the idea that sidestepped concentration bounds (because he didn't want to cover hypergeometric concentration bounds in a class). ChatGPT 5.4 was used to optimize the distribution of $p(\cdot)$ to avoid extra constants. Thanks to Amit Chakrabarti for feedback on the manuscript. Lastly, we note that El-Hayek, Henzinger, and Zheng \cite{elhayek2026llmsusedsimplifyalgorithms} just announced an independent simplification of the Assadi and Yazdanyar result.

{\small
\bibliographystyle{plain}
\bibliography{palette}
}
\end{document}